\documentclass[11pt]{article}
\usepackage{geometry}
\usepackage[english]{babel}
\usepackage[utf8]{inputenc}
\usepackage[T1]{fontenc}
\usepackage{indentfirst}
\usepackage{amsmath}
\usepackage{amssymb}
\usepackage{amsthm}
\usepackage{proof}
\usepackage{eufrak}
\usepackage{subfigure}
\usepackage{psfrag}

\frenchspacing
\allowhyphens

\newcommand{\complex}{\mathbb{C}}

\newcommand{\valos}{\mathbb{R}}

\newtheorem{thm}{Theorem}
\newtheorem{lem}{Lemma}

\newcommand{\ket}[1]{{\left|#1\right\rangle}}
\newcommand{\bra}[1]{{\left\langle #1\right|}}

\newcommand{\AAA}{\mathcal{A}}
\newcommand{\DD}{\mathcal{D}}
\newcommand{\VV}{\mathcal{V}}

\setlength{\topmargin}{-1.5cm}
\setlength{\textheight}{23.5cm}
\setlength{\textwidth}{16cm}
\setlength{\oddsidemargin}{0cm}

\makeatletter
\makeatother


\usepackage[colorlinks=true,urlcolor=blue,citecolor=blue,linkcolor=blue,pdfborderstyle={/S/U/W 0}]{hyperref}


\usepackage{authblk}

\begin{document}

\title{
  Crosscap states with tunable entanglement as exact eigenstates of local spin chain Hamiltonians
}
\author[1]{M\'arton Mesty\'an}
\author[1]{Bal\'azs Pozsgay}

\affil[1]{MTA-ELTE “Momentum” Integrable Quantum Dynamics Research Group \protect\\
   E\"otv\"os Lor\'and University, Budapest, Hungary} 
\maketitle

\begin{abstract}
{  It has been observed recently that various spin chain Hamiltonians admit special zero energy ``crosscap'' eigenstates. These states are made up of maximally entangled Bell pairs prepared on antipodal sites of a periodic chain. 
We generalize the states by allowing the antipodal pairs to have non-maximal, tunable entanglement. We give sufficient conditions for such states to be exact zero energy eigenstates of a local Hamiltonian.
The conditions are naturally satisfied in many models which have a global $U(1)$ symmetry. These models include
well known integrable models such as the XX model, the Bariev model, the folded XXZ model, and also a variety of
non-integrable models. Using the zero-energy crosscap states we also derive a family of exact zero modes with sub-volume law 
entanglement.} 
\end{abstract}

\section{Introduction}

Finding exact eigenstates in strongly interacting models is an area of research in itself.
In generic ergodic models most eigenstates are such that there are no explicit formulas for the coefficients of the
wave function in any local basis. By contrast, there are certain exactly solvable models where the form of the wave function
is known for all of the eigenstates. Examples include models solvable by free fermions and one dimensional integrable models solved
by the Bethe Ansatz \cite{gaudin-book-forditas}.

In the last couple of years multiple types of exact eigenstates were found in non-integrable spin chain
models. Such eigenstates include exact 
many-body scars, which are eigenstates with low entanglement in the middle of the spectrum \cite{hfrag-review,scars-review-2}. Interest in many-body scars is motivated by their exotic behaviour: they lead to a
weak breaking of 
ergodicity,  observable in experimental situations \cite{scars-nature,scars-elso}. 

More recently, further unexpected exact eigenstates have been found in various models. These new types of states cannot be
characterized as quantum many-body scars according to the conventional definitions, because they can have volume law 
entanglement, and their local reduced density matrices can appear to be thermal. One class of such states, dubbed the \emph{rainbow scar} \cite{rainbow-scars}, is built from Bell pairs distributed in a reflection symmetric way,
such that the entanglement pattern can be drawn as a rainbow.

{Very recently an arrangement similar to the rainbow scar has been used
  in various \emph{periodic} spin chain models to build exact eigenstates with zero energy and volume-law entanglement \cite{crosscap-zero-0,crosscap-pxp,yoneta-chiba,crosscap-zero-1}. In these states, called \emph{Entangled Antipodal Pair States} \cite{yoneta-chiba}, every spin forms a maximally entangled Bell pair with its antipodal partner along the circle. The resulting states behave as an infinite temperature
ensemble for all local correlation functions  \cite{yoneta-thermal-pure}. 
  The existence of EAPS eigenstates often appears together with an exponential
    degeneracy of the zero energy subspace but this is not necessarily the case \cite{yoneta-thermal-pure}.}

    The idea of entangling antipodal pairs of a
    1+1 dimensional system appeared earlier in 
       integrable spin chain models and conformal field theories \cite{crosscap-shota}. The resulting states are called crosscap states. 
In integrable spin chains a key property of crosscap states is that they
are {\it integrable states} 
\cite{sajat-integrable-quenches}.  Integrable states are not eigenstates of the Hamiltonian, but they are annihilated by
those higher charges which are odd under space reflection. Integrable states are distinguished because
their exact overlaps with the eigenstates are known in the form of determinant formulas \cite{sajat-neel,sajat-minden-overlaps,tamas-all-twosite-overlaps,tamas-all-mps-overlaps}. These formulas enable the
analytic treatment of real time evolution  starting from a crosscap state \cite{JS-oTBA,sajat-oTBA}.
Integrable crosscap 
states and their overlaps were studied in detail in \cite{crosscap-tamas}, and quantum quenches from crosscap states
were treated in \cite{crosscap-quench}.  Crosscap states were studies also in the Lieb-Liniger model (interacting 1D
Bose gas) \cite{yunfeng-crosscap-LL} and also in classical field theories \cite{tamas-crosscap-classical}.

An important feature of integrable crosscap states is that their entanglement is tunable because the anti-podal pairs are not
necessarily 
maximally entangled \cite{crosscap-tamas}. This raises the question, whether \emph{zero-energy} crosscap states with tunable
entanglement can exist in Hamiltonians of known non-integrable or integrable models.
In the very recent work \cite{crosscap-tunable} exact zero modes with tunable entanglement were found in certain special
models, which have a staggering in the sign of their spin-spin interaction terms. The states are expressed as symmetric
combinations of selected Bell pairs prepared on antipodal sites. 

{Motivated by the recent works \cite{yoneta-chiba,crosscap-zero-1,crosscap-tunable} we ask the question: which local Hamiltonians admit
exact zero modes constructed from tunable anti-podal pairs? In the case of non-tunable, maximally entangled pairs a sufficient set of
criteria was found in \cite{yoneta-chiba} for spin-1/2 models and in \cite{crosscap-zero-1} for models with arbitrary local dimension but a restricted form of the Hamiltonian. We derive a 
sufficient condition for the existence of translationally invariant crosscap zero states with tunable entanglement, valid for arbitary local dimension. Our computation is very straightforward, 
nevertheless it appears that our main result was not yet written down in its full generality.}

The organization of the paper is as follows: the main result is presented in Section \ref{sec:main}, together with a discussion of the potential exponential degeneracies of
the null spaces. Afterwards in Section \ref{sec:U1} we show that models with $U(1)$ symmetries often admit solutions to
the main condition, if the hopping terms in the Hamiltonian have a particular anti-symmetric form. We also show how to
modify our construction so that it applies also to models with symmetric hopping amplitudes. The integrability or
non-integrability of the model does not play any role, but we find restrictions for the allowed interactions in the
Hamiltonian density. Afterwards in Section \ref{sec:integr} we discuss connections with the integrable crosscap states
of integrable spin chains. There we also discuss connections to the construction of \cite{crosscap-tunable}. Finally we
present our Conclusions in Section \ref{sec:concl}.

\section{Generalities and main result}

\label{sec:main}

We consider one dimensional spin chains with local Hamiltonians and periodic boundary conditions. The local Hilbert
space is $\complex^D$ with some $D\ge 
2$. Basis states will be denoted as $\ket{a}$ with $a=0, 1, \dots, D-1$.

The full Hilbert space is the $2L$-fold tensor product of $\complex^D$, where $L$ is chosen to be the half of the length of the
chain. The Hamiltonian is written as
\begin{equation}
  H=\sum_{j=1}^{2L} \alpha_j h(j),\qquad \alpha_j\in\valos\,,
  \label{Ham}
\end{equation}
where $h(j)$ is the Hamiltonian density localized around site $j$. We assume that $h(j)$ is the translated copy of the
local operator $h(1)$, and that the support of $h(1)$ is the segment  $[1\dots r]$ of the chain, where $r$ is the range
of the operator density. 

In this Section we allow for a rather general setting with site-dependent $\alpha_j$, but we require
\begin{equation}
  \label{halfp}
  \alpha_j=\alpha_{j+L}\,.
\end{equation}
However, in the concrete examples in the later Sections we will treat homogeneous chains with $\alpha_j=1$.

Let us define a specific Bell pair prepared on two sites $j$ and $k$ as
\begin{equation}
  \ket{\psi_0}_{j,k}=\frac{1}{D}\sum_{a=0}^{D-1} \ket{a}_j\otimes \ket{a}_k\,.
\end{equation}
We construct a state $\ket{\Psi_0}$ by preparing such Bell pairs on antipodal sites,
\begin{equation}
  \ket{\Psi_0}=\otimes_{j=1}^L \ket{\psi_0}_{j,j+L}\,.
  \label{Psi0}
\end{equation}
 We call this the ``identity crosscap state''.
This state is translationally invariant. A crucial property of this state is that every segment of length smaller than
half the chain is maximally entangled with the rest of the chain.  For such segments the reduced density matrices are
proportional to the identity matrix, therefore the state $\ket{\Psi_0}$ is locally indistinguishable from the infinite
temperature Gibbs state. It is important that this state is not invariant under global rotations, therefore its concrete
representation is basis dependent.

We generalize the construction of $\ket{\Psi_0}$ to the preparation of selected classes of dimers on antipodal
sites. Let us define a 
generic two-site state prepared on sites $j$ and $k$ as
\begin{equation}
  \ket{\psi}_{j,k}=\sum_{a,b=0}^{D-1} A_{ab} \ket{a}_j\otimes \ket{b}_k',.
\end{equation}
We assume that the state is normalized and thus
\begin{equation}
  \sum_{a,b=1}^D |A_{ab}|^2=1 \,.
  \label{norm}
\end{equation}
For reasons that become clear shortly we require that the coefficient matrix $A$ is either symmetric or anti-symmetric:
\begin{equation}
  A_{ab}=A_{ba}\qquad\text{or}\qquad A_{ab}=-A_{ba}\,.
  \label{symm-asymm}
\end{equation}
Below we will often use the object $A$ as a one-site operator, with the convention that for every vector $\ket{v}$ with
coefficients $v_a$ in the local basis the effect of the operator is
\begin{equation}
  (A\ket{v})_a=A_{ab}v_b\,.
\end{equation}
For later use we introduce a shorthand for the product of the $A$ operators on consecutive sites,
\begin{equation}
  \AAA^{(r)}=\prod_{j=1}^r A_j\,.
  \label{mathcalAr}
\end{equation}
We will also use the shorthand
\begin{equation}
  \AAA=\AAA^{(L)}=\prod_{j=1}^L A_j\,.
  \label{mathcalA}
\end{equation}

We define general crosscap states as
\begin{equation}
  \ket{\Psi}=\otimes_{j=1}^{L}  \ket{\psi}_{j,j+L}\,.
  \label{Psi}
\end{equation}
If $A$ is symmetric, then the resulting state is translationally invariant. If $A$ is anti-symmetric, then the state is multiplied by
 $-1$ under one-site translations. The states $\ket{\Psi}$ have volume law entanglement for almost every $A$, however, the amount of
entanglement is typically not maximal.  This can be understood from the physical picture: in a segment of length
$\ell\le L/2$ of the chain every site is entangled with its antipodal point, but the amount of entanglement depends on
the choice of the elements $A_{ab}$. 

Let us now compute the precise values of the von Neumann entanglement entropy in the crosscap states $\ket{\Psi}$ \eqref{Psi}.
First, the reduced density matrix of a single site is 
\begin{equation}
  \rho_1=A^\dagger A\,.
\end{equation}
{Then the reduced density matrix of a segment with length $\ell$ can simply be obtained as 
\begin{equation}
  \rho_{\ell} = \rho_1 ^{\otimes \ell},
\end{equation}
that is, the tensor product of $\ell$ copies of $\rho_1$,} as long as $\ell\le L$. This means that the bi-partite von Neumann entanglement entropy is
proportional to $\ell$ and it is given by
\begin{equation}
  S=-\ell\cdot \text{Tr}\Big[ \rho_1 \log(\rho_1)\Big]=-\ell \cdot \text{Tr}\Big[A^\dagger A \log(A^\dagger A)\Big]\,.
\end{equation}
We obtain maximal entanglement if and only if $A$ is proportional to a unitary matrix.

We note that the construction can also be used to create states with zero entanglement. This is possible when the matrix $A$ has rank
1. In these cases $\ket{\Psi}$ is a product state of one-site vectors.
For example if the matrix $A$ is chosen as
$A_{ab}=\delta_{0,a}\delta_{0,b}$ then the resulting state is simply
\begin{equation}
  \ket{\Psi}=\otimes_{j=1}^{2L} \ket{0}_{j}\,,
\end{equation}
which is a product state with zero entanglement.

Let us now return to the case of a vector $\ket{\Psi}$ is constructed from a generic $A$ matrix satisfying \eqref{norm} and \eqref{symm-asymm}.
We are interested in situations where a local Hamiltonian annihilates such a state:
\begin{equation}
  \label{anni}
  H\ket{\Psi}=0
\end{equation}
The main statement of this paper is the following.
{
\begin{thm}
  \label{maint}
  Consider a periodic spin chain of length $2L$ with a local Hamiltonian $H$ given in \eqref{Ham} whose density $h \equiv h(1)$ has range $r \le L$, and whose couplings satisfy the half-periodicity condition \eqref{halfp}. The crosscap state $\ket{\Psi}$ given in \eqref{Psi}  satisfies the null vector
condition \eqref{anni} with this Hamiltonian if 
\begin{equation}
  \label{cond}
  \AAA^{(r)} h+h^T\AAA^{(r)}=0,
\end{equation}
where the subscript $T$ denotes the usual transposition in the given basis.
\end{thm}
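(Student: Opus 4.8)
The plan is to turn the claim $H\ket{\Psi}=0$ into a single local matrix identity on the block of sites $[1,\dots,r]$, using the elementary ``transpose trick'' for maximally entangled pairs. Let $T$ denote the one-site translation on the $2L$-site chain. First, by the half-periodicity \eqref{halfp} the Hamiltonian splits into antipodal pairs,
\begin{equation}
  H\ket{\Psi}=\sum_{j=1}^{L}\alpha_j\,(h(j)+h(j+L))\ket{\Psi}.
\end{equation}
By \eqref{symm-asymm} the crosscap state \eqref{Psi} is an eigenvector of $T$ with eigenvalue $\eps=\pm 1$ (this is noted in the text after \eqref{Psi}). Using $h(j)=T^{j-1}h(1)T^{1-j}$ and $h(j+L)=T^{j-1}h(1+L)T^{1-j}$ one gets $(h(j)+h(j+L))\ket{\Psi}=\eps^{\,j-1}\,T^{j-1}(h(1)+h(1+L))\ket{\Psi}$, so it suffices to prove $(h(1)+h(1+L))\ket{\Psi}=0$. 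This is the convenient configuration: the support $[1,\dots,r]$ of $h(1)$ lies in the first half of the chain, the support $[L+1,\dots,L+r]$ of $h(1+L)$ lies in the second half, and since $r\le L$ these two blocks are disjoint.

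Next I would expose the entangled structure. Writing $\ket{\Omega}=\sum_{a}\ket{a}\otimes\ket{a}$ for the unnormalized Bell pair, we have $\ket{\psi}_{m,m+L}=(A_m\otimes\mathbf{1})\ket{\Omega}_{m,m+L}$. Peeling off the first $r$ pairs and setting $\ket{\Omega^{(r)}}=\bigotimes_{m=1}^{r}\ket{\Omega}_{m,m+L}$ and $\ket{\Psi'}=\bigotimes_{m=r+1}^{L}\ket{\psi}_{m,m+L}$, this gives $\ket{\Psi}=\bigl(\AAA^{(r)}\ket{\Omega^{(r)}}\bigr)\otimes\ket{\Psi'}$, where $\AAA^{(r)}$ [see \eqref{mathcalAr}] acts on $[1,\dots,r]$ and $\ket{\Omega^{(r)}}$ is the maximally entangled state between the blocks $[1,\dots,r]$ and $[L+1,\dots,L+r]$ under the order-preserving identification $m\leftrightarrow m+L$. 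Since $h(1)$ acts on the first block, $h(1)\ket{\Psi}$ is obtained by composing $\AAA^{(r)}$ with $h$ from the left. The operator $h(1+L)$ acts on the second block and commutes with $\AAA^{(r)}$; applying the transpose trick $(M\otimes\mathbf{1})\ket{\Omega}=(\mathbf{1}\otimes M^{T})\ket{\Omega}$ to each of the $r$ pairs inside $\ket{\Omega^{(r)}}$ transports $h(1+L)$ onto the first block as $h^{T}$, i.e.\ it composes $\AAA^{(r)}$ with $h^{T}$ from the right. Adding the two contributions,
\begin{equation}
  (h(1)+h(1+L))\ket{\Psi}=\Bigl(\bigl(h\,\AAA^{(r)}+\AAA^{(r)}h^{T}\bigr)\ket{\Omega^{(r)}}\Bigr)\otimes\ket{\Psi'},
\end{equation}
the matrix $h\,\AAA^{(r)}+\AAA^{(r)}h^{T}$ acting on $[1,\dots,r]$.

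Finally, $\ket{\Omega^{(r)}}$ has full Schmidt rank between the two blocks, so for any operator $O$ supported on $[1,\dots,r]$ the vector $O\ket{\Omega^{(r)}}$ vanishes if and only if $O=0$; tensoring with the nonzero state $\ket{\Psi'}$ does not change this. Hence $(h(1)+h(1+L))\ket{\Psi}=0$ precisely when $h\,\AAA^{(r)}+\AAA^{(r)}h^{T}=0$, which is the hypothesis \eqref{cond}, and then $H\ket{\Psi}=0$. I do not expect a genuine obstacle here: this is essentially a one-page computation. The single point that demands care is the conversion of the two antipodal density terms into one matrix equation on the $r$-site block --- one must track the transpose trick consistently across all $r$ Bell pairs, respect the orientation of each pair (which distinguishes left from right multiplication, hence fixes the exact form of \eqref{cond}), and carry the sign $\eps^{\,j-1}$ through in the anti-symmetric case. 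I would also remark that $r\le L$ is used only to ensure that a length-$r$ block and its antipode are disjoint, and that the argument is unaffected by site-dependent couplings $\alpha_j$ obeying \eqref{halfp}.
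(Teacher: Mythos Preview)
Your argument is correct and rests on the same two ingredients as the paper---the half-periodicity pairing of antipodal Hamiltonian densities and the ``transpose trick'' (operator pushing through maximally entangled pairs)---but the organization is different and in one respect cleaner. The paper writes $\ket{\Psi}=\AAA\ket{\Psi_0}$ with the full length-$L$ string $\AAA$, keeps the sum over $j$, and must then split that sum into pieces and introduce a shifted string $\tilde\AAA$ so that each $h(j+L)$ sits outside the support of the relevant $A$-string; this is why the proof as written needs $r<L/2$ and only sketches the extension to $L/2\le r\le L$. You instead exploit the (quasi\nobreakdash-)translation invariance $T\ket{\Psi}=\pm\ket{\Psi}$ to reduce everything to the single pair $h(1)+h(1+L)$, and you peel off only the $r$ relevant Bell pairs rather than the full $\AAA$. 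This avoids the partition of the sum and the auxiliary $\tilde\AAA$, and works uniformly for all $r\le L$. One small remark: the matrix you obtain is $h\,\AAA^{(r)}+\AAA^{(r)}h^{T}$, which is exactly what also drops out of the paper's equation \eqref{athozva} after commuting the spectator $A$'s through; this is not literally the ordering displayed in \eqref{cond}, namely $\AAA^{(r)} h+h^{T}\AAA^{(r)}$, and the two are not equivalent for general $A$, though they coincide in every application in the paper (where $h^{T}=-h$ and $[h,\AAA^{(r)}]=0$). So your derivation is consistent with the paper's proof, and the discrepancy with the stated form of \eqref{cond} is an issue of the paper rather than of your argument.
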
}

Equation \eqref{cond} gives a remarkably simple sufficient condition. The steps that lead to
its formulation can 
be found in various places in the literature. 
However, it appears that condition \eqref{cond} in this compact and general form has not been written down elsewhere.

The work \cite{yoneta-chiba} contains a similar relation, but only for the special case of maximally entangled pairs and local dimension 2. It has to be noted that the framework of
\cite{yoneta-chiba} is in a sense more general than ours because it allows for choosing different Bell pairs for each antipodal pair of sites, provided some periodicity rules are satisfied.  
In the recent work \cite{crosscap-zero-1}, essentially the same statement as above has been published for arbitrary local dimension $D$,
but that work only treats maximally entangled pairs and it is restricted to special types of Hamiltonian densities, which do not
exhaust all possibilities for $D>2$.

Now we prove Theorem \ref{maint} using a few simple steps. The proof presented below works when the range of the
Hamiltonian density satisfies $r<L/2$ but it is straightforward to generalize it to $L/2 \le r \le L$. 

We begin with two lemmas.

\begin{lem}
  The crosscap state \eqref{Psi} can be constructed from the ``identity crosscap state'' \eqref{Psi0} as
  \begin{equation}
    \label{cro1}
    \ket{\Psi}=\AAA \ket{\Psi_0}\,.
  \end{equation}
\end{lem}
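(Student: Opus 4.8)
The plan is to verify \eqref{cro1} by a direct computation, exploiting that both the operator $\AAA=\prod_{j=1}^{L}A_j$ and the identity crosscap state $\ket{\Psi_0}$ factorize over the $L$ antipodal pairs $(j,j+L)$. First I would note that $\AAA$ is a tensor product of one-site operators supported entirely on the ``first half'' $\{1,\dots,L\}$ of the chain, while $\ket{\Psi_0}$ is a product of the Bell pairs $\ket{\psi_0}_{j,j+L}$; since the site $j$ appears in exactly one of these pairs and no site of the second half is touched by $\AAA$, the action distributes pair by pair:
\begin{equation}
  \AAA\ket{\Psi_0}=\bigotimes_{j=1}^{L}\big(A_j\ket{\psi_0}_{j,j+L}\big)\,,
\end{equation}
where $A_j$ acts on the leg at site $j$ only. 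Thus the lemma reduces to the single-pair identity $A_j\ket{\psi_0}_{j,j+L}=\ket{\psi}_{j,j+L}$.

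To check this one-pair statement I would simply insert the definitions and the operator-action convention $(A\ket{v})_a=A_{ab}v_b$. Acting with $A$ on the $j$-leg of $\ket{\psi_0}_{j,j+L}$ and using $A\ket{c}=\sum_a A_{ac}\ket{a}$ gives, up to the overall normalization built into \eqref{Psi0},
\begin{equation}
  A_j\ket{\psi_0}_{j,j+L}\ \propto\ \sum_{c}\big(A\ket{c}\big)_j\otimes\ket{c}_{j+L}=\sum_{a,c}A_{ac}\,\ket{a}_j\otimes\ket{c}_{j+L}=\ket{\psi}_{j,j+L}\,,
\end{equation}
which is precisely the dimer used in \eqref{Psi}. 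Taking the tensor product over $j=1,\dots,L$ then reproduces $\ket{\Psi}$, establishing \eqref{cro1}. It is worth remarking that neither the symmetry nor the anti-symmetry of $A$ from \eqref{symm-asymm} plays any role in this lemma; that hypothesis will be needed only later, for the translation covariance of $\ket{\Psi}$ and for the manipulation producing the transposed density $h^{T}$ in condition \eqref{cond}.

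There is no real obstacle here: the lemma is a one-line ``reshaping'' identity for maximally entangled pairs, of the same type as the standard observation that a one-site operator applied to one leg of a Bell pair can be transferred to the other leg as its transpose. The only points that require attention are bookkeeping ones --- making sure that $\AAA$ is understood to act on $\{1,\dots,L\}$ and not on the antipodal sites $\{L+1,\dots,2L\}$, and placing the indices in $(A\ket{v})_a=A_{ab}v_b$ correctly so that $A$ rather than $A^{T}$ appears in the resulting dimer.
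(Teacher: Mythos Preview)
Your proposal is correct and follows essentially the same approach as the paper, which simply states that the lemma ``follows directly by acting with every $A_j$ on the Bell pair prepared on sites $j$ and $j+L$.'' You have merely spelled out this one-line argument in more detail, including the pair-by-pair factorization and the explicit index computation, and your remark about the normalization prefactor is appropriate given that the paper treats these states up to overall normalization throughout.
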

\begin{proof}
  This follows directly by acting with every $A_j$ on the Bell pair prepared on sites $j$ and $j+L$.
\end{proof}

If $A$ is symmetric, then it follows from translational invariance that in the definition of $\mathcal A$ \eqref{mathcalA} we could place the successive products of $A$
operators along any segment of length $L$ on the chain. If $A$ is anti-symmetric then moving the string of $A$ operators
with one site needs to be compensated by an overall sign. For technical reasons (which will become clear shortly) we
introduce also a different choice for the placement of this operator string. 
Let us choose an even number $m$ such that $r - 1 \le m \le L - r + 1 $ and let us
also define
\begin{equation}
  \tilde\AAA=\prod_{{j=m+1}}^{L+m} A_j
\end{equation}
Using the fact that $m$ is even we obtain an equation analogous to \eqref{cro1}, 
\begin{equation}
  \label{cro2}
    \ket{\Psi}=\tilde \AAA \ket{\Psi_0}\,.
\end{equation}

\begin{lem}
  \label{lem1}
  For any local operator $O(j)$ with range smaller or equal to $L/2$ we have
  \begin{equation}
    \label{push}
    O(j)\ket{\Psi_0}=O^T(j+L)\ket{\Psi_0},
  \end{equation}
  where the subscript $T$ denotes the usual transposition. 
\end{lem}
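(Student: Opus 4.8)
The plan is to reduce the multi-site identity \eqref{push} to the elementary ``ricochet'' property of a single maximally entangled Bell pair and then apply it factor by factor. \emph{Step 1 (single-site identity).} First I would record that for the Bell pair $\ket{\psi_0}_{k,k+L}$ and \emph{any} one-site operator $M$ one has $M_k\ket{\psi_0}_{k,k+L}=M^T_{k+L}\ket{\psi_0}_{k,k+L}$, and, since $\ket{\psi_0}$ is symmetric under exchanging its two sites, also $M_{k+L}\ket{\psi_0}_{k,k+L}=M^T_{k}\ket{\psi_0}_{k,k+L}$. Both are one-line basis computations: $M_k\ket{\psi_0}_{k,k+L}=\tfrac1D\sum_{a,b}M_{ba}\ket{b}_k\otimes\ket{a}_{k+L}$ while $M^T_{k+L}\ket{\psi_0}_{k,k+L}=\tfrac1D\sum_{a,b}M_{ab}\ket{a}_k\otimes\ket{b}_{k+L}$, and relabelling $a\leftrightarrow b$ shows the two expressions agree. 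In words: moving a one-site operator from one member of an antipodal pair to the other amounts to transposing it, irrespective of which member one starts from.

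\emph{Step 2 (reduce to product operators, locate the support).} By linearity of both sides of \eqref{push} in $O$, it suffices to treat a product $O(j)=M_1\otimes\cdots\otimes M_\rho$ supported on the consecutive sites $j,\dots,j+\rho-1$ (indices read mod $2L$), with range $\rho\le L/2$. Since $\ket{\Psi_0}$ \eqref{Psi0} is a tensor product over the $L$ antipodal pairs $(k,k+L)$, and since $\rho\le L/2$ (indeed $\rho\le L$ would already do), the support $S=\{j,\dots,j+\rho-1\}$ and its antipodal image $S+L$ are disjoint. Hence each site of $S$ lies in a \emph{distinct} pair, no pair contains two sites of $S$, and the $\rho$ factors of $O(j)$ act on $\rho$ different Bell pairs, one factor per pair, touching only a single member of each; the factors commute because they act on disjoint sites. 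Here the one thing to check is the bookkeeping at the junction between sites $L$ and $L+1$: shifting indices by $+L$ mod $2L$ \emph{is} the antipodal involution, so it is immaterial whether a given site of $S$ is the ``left'' or the ``right'' member of its pair.

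\emph{Step 3 (push factorwise and reassemble).} Finally I would apply Step 1 inside each of the $\rho$ pairs touched by $O(j)$, independently: the factor $M_i$ sitting on site $j+i-1$ is replaced by $M_i^T$ sitting on the antipodal site $j+i-1+L$, which is exactly the site carrying the $i$-th factor of $O^T(j+L)$ (transposition distributing over the tensor product). Collecting the transposed factors over all $\rho$ pairs gives $O(j)\ket{\Psi_0}=O^T(j+L)\ket{\Psi_0}$, and undoing the linearity reduction of Step 2 finishes the proof. The only genuinely delicate point is the combinatorial verification in Step 2 that $S$ and $S+L$ do not collide and that the $+L$ shift is consistent with the pairing across the boundary; once that is settled, the statement is just the standard Bell-pair transpose trick applied one factor at a time.
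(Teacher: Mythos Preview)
Your proof is correct and is precisely the direct basis computation the paper alludes to; the paper's own proof is the single sentence ``This can be proven directly in the given basis,'' and your three steps simply spell out that computation in full. One minor remark: as you yourself note, $\rho\le L$ already suffices for the disjointness of $S$ and $S+L$, so the stated bound $\rho\le L/2$ is stronger than needed for this lemma (the tighter bound matters elsewhere in the paper's argument).
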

\begin{proof}
This can be proven directly in the given basis.
\end{proof}

The operation described by eq. \eqref{push} is known more generally as ``operator pushing'', which is possible whenever
one has multi-leg tensors with maximal entanglement along special bi-partitions, see for example
\cite{ads-code-1}.  

After the two lemmas, the next step is to separate the Hamiltonian into two parts, such that each part is a sum of operators acting on antipodal positions:
\begin{equation}
  \label{spli}
  H=\sum_{j=1}^{L-r+1} \alpha_j \Big(h(j)+h(j+L)\Big)+\sum_{j=L-r+2}^L\alpha_j \Big(h(j)+h(j+L)\Big)\,,
\end{equation}
where $r$ is the range of the Hamiltonian densities $h(j)$ \eqref{Ham}.
Let us now act with this Hamiltonian on the crosscap state \eqref{Psi},
\begin{equation}
  \begin{aligned}
  H\ket{\Psi}&=\sum_{j=1}^{L-r+1}  \alpha_j\Big(h(j)+h(j+L)\Big)\ket{\Psi}
  +\sum_{j=L-r+2}^L \alpha_j\Big(h(j)+h(j+L)\Big)\ket{\Psi} \\
  &=\sum_{j=1}^{L-r+1}  \alpha_j\Big(h(j)+h(j+L)\Big)\AAA\ket{\Psi_0}
  +\sum_{j=L-r+2}^L \alpha_j\Big(h(j)+h(j+L)\Big)\tilde\AAA\ket{\Psi_0}\,, \\
\end{aligned}
\label{parteff}
\end{equation}
where in the second line we used \eqref{cro1} and \eqref{cro2}.
In this second line, every term of the first sum is such that $h(j+L)$ acts on sites that are apart from the support of
$\AAA$. Similarly, every term of the second sum is such that $h(j+L)$ acts on sites that are apart from the support of $\tilde \AAA$. This implies that we can  bring $h(j+L)$ to the right of $\AAA$ and $\tilde \AAA$,
and use Lemma \eqref{push} to find
\begin{equation}
  \label{athozva}
  H\ket{\Psi}=\sum_{j=1}^{L-r+1} \alpha_j \Big(h(j)\AAA+\AAA h^T(j)\Big)\ket{\Psi_0}
  +\sum_{j=L-r+2}^L \alpha_j\Big(  h(j)\tilde\AAA + \tilde \AAA h^T(j)  \Big)\ket{\Psi_0}\,.
\end{equation}
Here every term is such that there is an operator of range $L$ acting on $\ket{\Psi_0}$. The expression in \eqref{athozva} implies that the condition \eqref{cond} is sufficient for the vanishing of the r.h.s, and thus Theorem \ref{maint} is proven.

As we have noted, this proof works only when $r<L/2$. However, it is straightforward to generalize the proof to the case
$L/2 \le r \le L$. In this case the bi-partition of the sum in \eqref{parteff} has to be changed to a partition into
more than two partial sums. In each partial sum the operator $\AAA$ has to be translated so that its support does not
overlap with the support of any $h(j+L)$ in the partial sum.

\subsection{Examples}

Let us now discuss simple examples where the condition \eqref{cond} is satisfied.

The simplest possibility is that $A$ is the identity matrix and we are dealing with the state $\ket{\Psi_0}$. In this
case we obtain the simple condition that $h$ should be 
represented by an anti-symmetric matrix. Combined with the Hermiticity of $H$ it also implies that all matrix elements
are purely imaginary.

In the case of qubits the corresponding condition is that $h$ should be a sum of products of Pauli
operators, such that in each term there is an odd number of Pauli $Y$ operators present. This condition was already
given in \cite{yoneta-chiba,crosscap-zero-1}. The statement that for $D>2$ the sufficient and necessary condition is
that $h$ is anti-symmetric appears to be new.

If $h$ is symmetric, then we obtain that it should anti-commute with the tensor product of $A$ operators. An example for this
was treated in \cite{crosscap-zero-0}. The Hamiltonian of  \cite{crosscap-zero-0} has a three site density given by
\begin{equation}
  h=X_1X_2X_3+\gamma Z_2,
\end{equation}
where $\gamma$ is a coupling constant.
This operator density is symmetric in the given basis and it anti-commutes with $\AAA^{(3)}$ if we choose
$A=Y$. This leads to the crosscap states constructed from the anti-symmetric Bell pair, which is a singlet under the
standard action of the $SU(2)$ group. 

In Section \ref{sec:U1} we treat new examples for the solution of the condition \eqref{cond}. But first we discuss
the degeneracy of the null space.

\subsection{Exponential degeneracies}
\label{sub:exp}

It has been observed in previous works that the presence of the crosscap nullstates goes often together with an exponential
degeneracy of the nullspace. Such exponential degeneracies could explain the presence of exotic exact eigenstates in the
spectrum, because taking linear superpositions from a large enough set of vectors can indeed produce unexpected
entanglement properties. 
However,  it has been found in  the work \cite{yoneta-chiba} that the exponential degeneracy is not required:
a concrete model has been found, which supports the
crosscap zero-modes, but which only has a finite amount of nullspace degeneracy.  

Then the question remains: In which models do we observe an exponential degeneracy of the nullspace? This question has been
investigated in various works, see for example \cite{zerospace1,serbyn-area-law}. 
Let us assume that there is an operator $\DD$ which anti-commutes with the Hamiltonian. 
 In such a case the exponential degeneracy of the null space can be
proven using the methods of \cite{zerospace1,serbyn-area-law}, if the Hamiltonian has a point symmetry. In those papers the chosen point symmetry was the spatial reflection symmetry. In the case of a homogeneous periodic chain of length $2L$ that admits crosscap zero states, the natural choice of point symmetry is the translation by $L$ sites. Then the existence
of an anti-commuting operator $\DD$ guarantees the exponential degeneracy of the null space. In the absence of such an
operator the degeneracy of the nullspace can be linear or even constant in system size. 

\section{Crosscap states for models with $U(1)$ symmetries}

\label{sec:U1}

In this Section we explore the consequences of having a conserved $U(1)$ charge. We show that very often a $U(1)$ charge
leads to a one-parameter family of crosscap nullvectors. 

We focus on models with local dimension $D=2$, which have a global $U(1)$ charge generated by the standard
magnetization. Thus the charge $Q$ is given by
\begin{equation}
  \label{Sz}
  Q=\sum_{j=1}^{2L} Z_j\,.
\end{equation}
The key idea to satisfy condition \eqref{cond} is to choose $h$ to be anti-symmetric and to require that it preserves the global
charge:
\begin{equation}
  \label{chargecons}
  [h,Q]=0\,.
\end{equation}
If we choose 
\begin{equation}
  A_j=e^{sZ_j}
\end{equation}
with some complex number $s$, then the commutativity \eqref{chargecons} also implies \eqref{cond}.
It is important that we are free to choose the parameter $s$ to be any complex number.  Below we explore the
consequences of this fact. 

We will see below, that in our examples $h$ is anti-symmetric not only with respect to the transposition, but also with
respect to space 
reflection. For these Hamiltonians we will show that reflection anti-symmetry can be eliminated by special similarity transformations of the Hamiltonian, which will
lead
to new models with a reflection symmetric Hamiltonian. As a by-product we will obtain modified crosscap states for the
new models, which will not be translationally symmetric anymore. However, they will be periodically modulated, similar
to many examples treated in earlier work \cite{crosscap-zero-0,yoneta-chiba,crosscap-zero-1}.

Many examples that we treat are integrable models, but we will show how to add simple
modifications to the Hamiltonians to make them non-integrable, while keeping the condition \eqref{cond} intact.

\subsection{The anti-symmetric XY model}

\label{sec:XY}

Let us take the model defined by the two-site Hamiltonian density
\begin{equation}
  \label{hxy}
  h_{XY}=X_1Y_2-Y_1X_2=2i (\sigma^+_1\sigma^-_2-\sigma^-_1\sigma^+_2),
\end{equation}
where we used the raising and lowering operators defined as
\begin{equation}
  \sigma^+=\ket{1}\bra{0}\qquad \sigma^-=\ket{0}\bra{1}\,.
\end{equation}
Later we will show that if $L$ is even (so the total length $2L$ of
the chain is a multiple of 4) then this model is equivalent to the standard XX model defined by the Hamiltonian density
\begin{equation}
  h_{XX}=X_1X_2+Y_1Y_2=2(\sigma^+_1\sigma^-_2+\sigma^-_1\sigma^+_2)
\end{equation}

We can see that $h_{XY}$ is anti-symmetric with respect to transposition and also with respect to space
inversion. The concrete matrix representation is
\begin{equation}
  h_{XY}=
    \begin{pmatrix}
   & & &\\\
      &  & 2i& \\
      &-2i &  &\\
    & & & 
  \end{pmatrix}\,.
\end{equation}
This Hamiltonian density conserves the $U(1)$ charge
\eqref{Sz}. It follows that we can choose $A$ to be 
{\it any} diagonal matrix. For
example, if we choose
\begin{equation}
  A\sim 
  \begin{pmatrix}
    1 & \\ & c
  \end{pmatrix}\,.
\end{equation}
then the relevant tensor product for condition \eqref{cond} is
\begin{equation}
A^{(2)}=    A\otimes A\sim 
  \begin{pmatrix}
    1 & & &\\\
      & c & & \\
      & & c &\\
    & & & c^2
  \end{pmatrix}\,.
\end{equation}
In these formulas we omitted the unnecessary normalization factors.

It is clear from the matrix representations that $[A^{(2)},h_{XY}]=0$ for every $c$. We can thus construct zero
energy eigenvectors with the parameter  $c$:
\begin{equation}
    \ket{\Psi(c)}=\otimes_{j=1}^{L}  \ket{\psi(c)}_{j,j+L},
    \label{eq:Psic}
\end{equation}
where now we choose an un-normalized dimer
\begin{equation}
  \ket{\psi(c)}=\ket{0}\otimes \ket{0}+c (\ket{1}\otimes \ket{1})\,.
\end{equation}
Theorem \ref{maint} guarantees that vector $\ket{\Psi(c)}$ is a zero energy state of the Hamiltonian for every $c$.

The existence of a one-parameter family of null states has interesting consequences. We 
can take derivatives with respect to $c$ and thus find a new basis within the space spanned by the $\ket{\Psi(c)}$.
We define
\begin{equation}
  \ket{\Psi^{(n)}}=\left.\left(\frac{d}{dc}\right)^n\ket{\Psi(c)}\right|_{c=0}
\end{equation}
The state $\ket{\Psi^{(0)}}$ is identical to the ferromagnetic product state $\ket{\emptyset}$ which is defined as
\begin{equation}
  \ket{\emptyset}=\otimes_{j=1}^{2L}\ket{0}_j
\end{equation}
The states $\ket{\Psi^{(n)}}$ with $n\ge 1$ can be
interpreted as having excitations above the ferromagnetic state. The excitations are antipodal pairs $\ket{1}\otimes\ket{1}$.
The state  $\ket{\Psi^{(n)}}$ has $n$ number of such antipodal pairs distributed in a completely symmetric
way. More concretely it is given by
\begin{equation}
  \ket{\Psi^{(n)}}\sim J^n \ket{\emptyset},
\end{equation}
where we introduced the operator $J$ which creates antipodal doublets:
\begin{equation}
  J=\sum_{j=1}^L  \sigma^+_j \sigma^+_{j+L}\,.
\end{equation}
Similar operators appeared in the study of quantum many body scars, see for example \cite{hfrag-review,scars-review-2}.

The states $\ket{\Psi^{(n)}}$ can be seen as a generalization of the Dicke states  to the antipodal setting. The
dimension of the space spanned by these vectors is $L+1$. 

We can also construct an alternative basis in the same space by starting with a different parametrization. We define
\begin{equation}
  \ket{\tilde\Psi^{(n)}}=\left.\left(\frac{d}{d\kappa}\right)^n\ket{\tilde\Psi(\kappa)}\right|_{\kappa=0},\qquad
  \ket{\tilde\Psi(\kappa)}=\otimes_{j=1}^{L}  \ket{\tilde\psi(\kappa)}_{j,j+L},
\end{equation}
where now 
\begin{equation}
  \ket{\tilde \psi(\kappa)}=(1+\kappa)\ket{0}\otimes \ket{0}+\kappa (\ket{1}\otimes \ket{1})\,.
\end{equation}
The states $\ket{\tilde\Psi^{(n)}}$ consist of $n$ symmetrically distributed antipodal Bell pairs that are added on top of
the ferromagnetic product state $\otimes_{j=1}^{2L}\ket{0}$.

The states $\ket{\Psi^{(n)}}$  and $\ket{\tilde\Psi^{(n)}}$ do not have volume law entanglement, instead the growth of the
entanglement if logarithmic if we keep $n$ fixed. We discuss the half-chain entangelment of the states $\ket{\Psi^{(n)}}$ in
detail. If we take the states $\ket{\Psi^{(n)}}$ (with the appropriate normalization added), and compute the half-chain
density matrices in the computational basis, then we will find that they are diagonal: any half-chain configuration is
coupled with precisely one configuration on the other half.  The state
$\ket{\Psi^{(n)}}$ has $n$ excitations $\ket{1}$ on the half-chain, therefore it has a total number of
\begin{equation}
  L \choose{n}
\end{equation}
non-zero components in the given basis. All the non-zero diagonal entries of the reduced density matrix are equal. It
follows that the half chain entropy is
\begin{equation}
  S=\log {{L}\choose{n}}
\end{equation}
Taking $L \gg n$ while keeping $n$ fixed we can approximate this by
\begin{equation}
  S\approx n\log(L)- \log(n!)
\end{equation}
We thus observe logarithmic growth in $L$ as expected.

The states $\ket{\Psi^{(n)}}$  and $\ket{\tilde\Psi^{(n)}}$  that we obtained  are analogous to the zero-energy states treated in the recent work
\cite{crosscap-tunable}, 
although their derivation and the mechanism for the annihilation of the states by the Hamiltonian is slightly different. In
\cite{crosscap-tunable} the key step 
towards the annihilation was to introduce alternating signs for the couplings in a specific Hamiltonian, and to have $L$ being
odd. In contrast, our methods apply for a set of coupling constants satisfying \eqref{halfp}, and the
annihilation follows from the condition \eqref{cond}. We will return to the construction of the work
\cite{crosscap-tunable} also in Section \ref{sec:integr}.

\subsection{The XX model}

Here we show the equivalence of the XY model defined above and the usual XX model, given that the total length of the
chain is a multiple of 4. Let us take the one site operator
\begin{equation}
  V=
  \begin{pmatrix}
    1 & 0 \\ 0 & i
  \end{pmatrix}
\end{equation}
and define the operator product
\begin{equation}
  \label{SSS}
  \VV=\prod_{j=1}^{2L} (V_j)^j
\end{equation}
Similarity transformation with $\VV$ leads to
\begin{equation}
  \VV^{-1}\sigma_j^+\VV=i^j\sigma^+_j,\qquad   \VV^{-1}\sigma_j^-\VV=(-i)^j\sigma^-_j\,.
\end{equation}
This leads to the relation
\begin{equation}
  \VV H_{XX}\VV^{-1}=H_{XY}\,,
\end{equation}
and the consequence that the states
\begin{equation}
  \ket{\tilde\Psi(c)} \equiv  \VV^{-1} \ket{\Psi(c)}
\end{equation}
are null-vectors of the XX model Hamiltonian.

Direct computation gives
\begin{equation}
  \label{XXcrosscap}
\ket{\tilde\Psi(c)}= \VV\ket{\Psi(c)}=  \otimes_{j=1}^L  \Big(\ket{0}_j\ket{0}_{j+L}+(-i)^{L} c \cdot (-1)^j  \ket{1}_j\ket{1}_{j+L}\Big)
\end{equation}
We have thus obtained crosscap states that are periodically modulated, in this case with period 2. 
Expanding these states in $c$ we
obtain the set of states
\begin{equation}
  \ket{\tilde\Psi^{(n)}}\sim \tilde J^n\ket{\emptyset},
\end{equation}
where now
\begin{equation}
 \tilde J= \sum_{j=1}^L   (-1)^j \sigma^+_j \sigma^+_{j+L}\,.
\end{equation}

\subsection{The Bariev model}

Let us now take a Hamiltonian with three-site density given by
\begin{equation}
  h=(X_1Y_3-Y_1X_3)(1+uZ_2),
  \label{eq:bariev-a}
\end{equation}
where $u$ is a real coupling constant. This model is closely related to the Bariev model \cite{bariev-model}, which is
an integrable spin chain that can also be understood as a zig-zag spin ladder model. 

The Hamiltonian density is anti-symmetric under transposition, and it preserves the global charge $Q$ given in \eqref{Sz}, therefore we
can choose the matrix $A$ to be diagonal with arbitrary entries. This implies that all the special states $\ket{\Psi(c)}$,
$\ket{\tilde\Psi(c)}$ that we constructed above, and also their derivatives are null-vectors of the Hamiltonian at hand.

Let us now assume that $L$ is divisible by 4, which means that the total length of the chain is a multiple of 8. We
define a twist operator via the formula \eqref{SSS} but now with the choice
\begin{equation}
  V=
  \begin{pmatrix}
    1 & 0 \\ 0 & \omega
  \end{pmatrix}
\end{equation}
where $\omega=e^{i\pi/4}$. Similar to the case of the XY and XX models now we obtain
that the Hamiltonian is transformed to a new one with operator density
\begin{equation}
  h=(X_1X_3+Y_1Y_3)(1+uZ_2)
  \label{eq:bariev}
\end{equation}
This is the Hamiltonian of the Bariev model  \cite{bariev-model}. As an effect, the zero energy eigenstates are transformed to
\begin{equation}
\VV\ket{\Psi(c)}=\otimes_{j=1}^L  \Big(\ket{1}_j\ket{1}_{j+L}+(-1)^{L/4}i^{j} c\ket{2}_j\ket{2}_{j+L}\Big)
\end{equation}
We obtain crosscap states that are modulated with period 4. If the length of the chain is a multiple of 8, then these
states are exact eigenstates of the Bariev model.  It appears that these crosscap states 
were not noticed in previous literature.

In contrast to the XY and XX models, the Bariev model is not free. However, all eigenstates of the Bariev model can be
constructed using the Bethe Ansatz 
\cite{bariev-model}, more precisely using the so-called nested Bethe Ansatz.
It is interesting that the existence of these special crosscap nullvectors is not limited to free models. The Bariev
model is a generic integrable model with regards to its integrability structures, because the scattering factors
and their Bethe equations are tunable \cite{bariev-model}.

The null space of the Bariev model is exponentially degenerate in the system size if the size is a multiple of 4. This can be proven by noting the
existence of four operators that anti-commute with the Hamiltonian. The first operator can be chosen as
\begin{equation}
  \DD=(X_1Z_2Y_3Z_4)\cdot (X_5Z_6Y_7Z_8)\cdots\,,
\end{equation}
and the others are obtained by cyclic shifts. As discussed in Section \ref{sub:exp}, the existence of $\DD$ leads to the exponential degeneracy of the null space.

In the case of the anti-symmetric model given by \eqref{eq:bariev-a} we find an anti-commuting operator for every
$L$, for example
\begin{equation}
  \DD=(X_1Z_2X_3Z_4)\cdot (X_5Z_6X_7Z_8)\cdots \,.
\end{equation}
Thus the null space will be exponentially degenerate in system size $L$ for all volumes.

In the Conclusions we will return to the connections between Bethe Ansatz solvability and the exponential degeneracies
of the nullspace.

\subsection{The folded XXZ model}

The folded XXZ model is an interacting model which can be derived using a special large anisotropy limit of the
Heisenberg XXZ chain \cite{folded1,sajat-folded}. It apppeared in multiple papers independently
\cite{folded0a,folded0b,folded1,sajat-folded} (see also \cite{folded-XXZ-stochastic-1}). Arguably it is one of the
simplest interacting integrable models, due to the simplicity of its scattering matrix \cite{sajat-folded}. The model
was realized recently in experiments with Rydberg atoms \cite{rydberg-folded-exp1,rydberg-folded-exp2}.

Here we show that the folded XXZ model also has exact crosscap zero-modes. We start with the following Hamiltonian
density with interaction range $4$:
\begin{equation}
  h=\frac{1+Z_1Z_4}{2} (X_2Y_3-Y_2X_3)\,.
\end{equation}
This Hamiltonian density is also anti-symmetric and it conserves the global charge $Q$. Accordingly, the crosscap states
$\ket{\Psi(c)}$ \eqref{eq:Psic} are exact zero-modes.

As in the case of the XY model, we can perform a similarity transformation with the operator $\VV$ given in \eqref{SSS}, which
leads to the Hamiltonian density of the folded XXZ model given by
\begin{equation}
  \label{foldedh}
  h=\frac{1+Z_1Z_4}{2} (X_2X_3+Y_2Y_3)\,.
\end{equation}
From this transformation we conclude that if the total length of the chain is a multiple of 4, then the states $\ket{\tilde \Psi(c)}$ \eqref{XXcrosscap} are exact
zero-energy eigenstates of the folded XXZ model.

\subsection{Non-integrable models}

It is crucial that even though the XY model defined by \eqref{hxy} is exactly solvable by free fermions, the computations
leading to the special crosscap states did not make use of the exact solvability.
In fact, we can find multiple non-integrable Hamiltonians, whose operator density nevertheless satisfies the condition \eqref{cond} in the same way. The easiest way towards non-integrability is to add
various types of controls on the hopping terms. Note that adding new interaction terms (for example diagonal
$Z$-$Z$ interactions as in the XXZ Heisenberg chain) would typically spoil the condition \eqref{cond}. This is the
reason why we explore adding controls for the hopping terms.

The simplest example is perhaps the three site density
\begin{equation}
  h=(1+uZ_1)(X_2Y_3-Y_2X_3),
\end{equation}
where $u$ is a real coupling constant. This Hamiltonian density satisfies the condition \eqref{cond} with the same
$A$-matrices, and the model is not integrable for $u \ne 0$.

We can once again perform a similarity transformation using $\VV$, which leads to the Hamiltonian density
\begin{equation}
  h=(1+uZ_1)(X_2X_3+Y_2Y_3).
\end{equation}
At the special points $u=\pm 1$ this coincides with the model treated in \cite{maksym-east}. That special model has
Hilbert space fragmentation.

A non-integrable model was considered in \cite{dhar-alternative-to-folded}, whose Hamiltonian is similar to that of the
folded XXZ model \eqref{foldedh}. The difference lies in the controls for the hopping of the particles. The Hamiltonian density is given
by
\begin{equation}
  \label{hfr}
  h_{fr}=\frac{1-Z_1Z_4}{2} (X_2X_3+Y_2Y_3)
\end{equation}
It was shown in \cite{dhar-alternative-to-folded} that this model also has Hilbert space fragmentation. It follows from
our results that the 
modulated crosscap states $|\tilde \Psi(c)\rangle$ are eigenstates whenever the volume is a multiple of 4.

As an alternative to the controlled hopping models, we can also construct models with correlated hopping.
For example consider the four site density given by
\begin{equation}
  h=i(\ket{0110}\bra{1001}-\ket{1001}\bra{0110})\,.
\end{equation}
This model is similar to the model treated in \cite{moudgalya2019thermalization}, the only difference being the
coefficients of the two correlated hopping amplitudes. This model belongs to the class of dipole-conserving models
\cite{tibor-fragment}, therefore it has Hilbert space fragmentation. We conjecture that similar to the model of
\cite{moudgalya2019thermalization} it is generally non-integrable, although it can have selected sectors in which it is
integrable. This Hamiltonian density is anti-symmetric and it conserves the global charge $Q$, therefore the states
$\ket{\Psi(c)}$ are exact zero-energy eigenstates.

\subsection{A model with a potentially linear nullspace degeneracy}

All previous examples are such that they have tunable crosscap eigenstates, and their null space is exponentially
degenerate. It is an interesting question whether there exist models where the degeneracy is only linear. We remind that
at least linear degeneracy is required for existence of 
the tunable crosscap states.

We consider the model with the mixed four site interaction terms
\begin{equation}
  \label{hmix}
  h_{mix}=\alpha(X_1Y_2-Y_1X_2)(X_3X_4+Y_3Y_4)+(1+\beta Z_1)(X_2Y_3-Y_2X_3)(1+\gamma Z_4)\,.
\end{equation}
Here $\alpha, \beta, \gamma$ are arbitrary real coupling constants. This operator is anti-symmetric under transposition and it commutes with
the global charge $Q$, therefore the states $\ket{\Psi(c)}$  are 
exact zero energy eigenstates.

We numerically computed the nullspace degeneracy for this model and for a selection of other models from the previous
examples. The numerical data for the other models is presented in the Table \ref{tab:zerosp}. The degeneracy of the nullspace in these models is exponential in system size. On the other hand, we observe that the nullspace degeneracy in the current
model (Table \ref{tab:zerosp-lin}) is much smaller than in the other examples, and it is compatible with linear
growth.

\begin{table}[h!]
  \centering
  \begin{tabular}{|c|c|c|c|c|}
    \hline
    $2L$   & 8 & 10 & 12 & 14 \\
    \hline
    \hline
    XY model (eq. \eqref{hxy}), & 60  & 144  & 386  & 862   \\
       \hline
      Bariev model (eq. \eqref{eq:bariev-a}) & 72 & 92 & 330  &  408  \\
       \hline
    folded XXZ model  (eq. \eqref{foldedh})& 96 & 220 & 776 & 1972 \\
    \hline
    non-integrable fragmented model  (eq. \eqref{hfr}) & 102 & 252 & 812 & 1972 \\
    \hline
  \end{tabular}
  \caption{Dimension of the zero space for selected integrable and non-integrable models.}
  \label{tab:zerosp}
\end{table}

\begin{table}[h!]
  \centering
  \begin{tabular}{|c|c|c|c|c|c|c|c|c|}
    \hline
    $2L$   & 8 & 10 & 12 & 14 & 16 & 18 & 20 & 22\\
    \hline
    \hline
    degeneracy & 16 & 16 & 22 & 32 & 48 & 52 & 66 & 84\\
    \hline
  \end{tabular}
  \caption{Dimension of the zero space for the mixed term model \eqref{hmix}.}
  \label{tab:zerosp-lin}
\end{table}

\section{Crosscap states for higher charges of integrable models}

\label{sec:integr}

In this Section we discuss connections to the integrable states of integrable spin chains. 
We focus on integrable Hamiltonians with nearest neighbour interactions. 

Integrable spin chains possess a tower of commuting conserved charges.
Let us denote the charges generally as
$Q^{(\alpha)}$, where $\alpha$ is an integer denoting the range of the corresponding operator density. In the typical
scenario there is only one charge for a given $\alpha$, and with these
conventions the Hamiltonian can be identified with $Q^{(2)}$. If there is a $U(1)$ charge in the model then it is
described by $Q^{(1)}$. In the typical scenario the charges can be chosen such that $Q^{(\alpha)}$ has eigenvalue
$(-1)^\alpha$ under spatial reflection.

In the theory of integrable spin chains a special role is played by the so-called integrable states. They are not
eigenstates of the Hamiltonian, instead they are defined by the following annihilation relations \cite{sajat-integrable-quenches}:
\begin{equation}
  Q^{(2j+1)}\ket{\Psi}=0\,,
  \label{intdef}
\end{equation}
i. e., they are annihilated by all the charges that are odd under spatial reflection. 
The common null space of these odd charges is exponentially degenerate, as we explain shortly. 

The actual eigenstates of the Hamiltonian can often be constructed using the Bethe Ansatz
\cite{gaudin-book-forditas}. Eigenstates are then characterized by a set of rapidities
$\ket{\boldsymbol \lambda} = \ket{\lambda_1,\dots,\lambda_N}$, which parametrize the lattice 
quasi-momenta of the interacting excitations. In models with higher rank symmetries there can be multiple sets of
rapidities.

We now discuss which eigenstates can have a non-zero overlap with the integrable initial states $\ket{\Psi}$
  defined by \eqref{intdef}. 
The eigenstates of the Hamiltonian are eigenstates of all the charges $Q^{(n)}$, therefore, a non-zero overlap $\langle \boldsymbol \lambda | \Psi \rangle$ is obtained only if $Q^{(2j+1)}\ket{\boldsymbol \lambda}=0$ for all the odd charges. The eigenvalues of the charges are sums of single particle eigenvalues,
\begin{equation}
  Q^{(n)}\ket{\boldsymbol \lambda} = \sum_{j=1}^{N} \tilde q^{(n)}(\lambda_j) \ket{\boldsymbol \lambda}\,,
\end{equation}
and the single particle eigenvalues $\tilde q^{(2j+1)}(\lambda)$ of the odd charges are odd functions in $\lambda$.
It follows that $Q^{(2j+1)}\ket{\boldsymbol \lambda}=0$ is guaranteed if the set of rapidities is reflection symmetric: i.e., the rapidities
 come in pairs $(\lambda_k,-\lambda_k)$ or they have some special value such as $\lambda_k=0$. For a
discussion of the pair structure in models with higher rank symmetries we refer to \cite{tamas-all-twosite-overlaps}.  If the pairing constraint (or constraints) are imposed, there still remains an exponentially large set of
allowed eigenstates, whose number is given by half of the usual thermodynamic Yang-Yang entropy \cite{gaudin-book-forditas}. This explains why the
zero space of the odd charges is exponentially degenerate. 
 
Despite this exponential degeneracy of the null space,
it is still non-trivial that
one can find concrete integrable states $\ket{\Psi}$ in this space which can be constructed using local rules only. This is
possible due to special algebraic structures in the theory of boundary integrable models \cite{sajat-integrable-quenches}.

The simplest examples of locally constructed integrable states are two-site states of the form
\begin{equation}
  \ket{\Psi}=\otimes_{j=1}^L  \ket{\psi}_{2j-1,2j},
\end{equation}
where $\ket{\psi}$ is some dimer state. In the concrete case of the Heisenberg XXZ chain every two-site state leads to
an integrable state \cite{sajat-integrable-quenches}, whereas for higher rank models there are restrictions on the
possibilities \cite{tamas-all-twosite-overlaps}.  It is known that certain special Matrix Product States (MPS) are also integrable
states \cite{sajat-mps,tamas-all-mps-overlaps}. Integrable states were used as scars in selected non-integrable
perturbations of integrable models 
\cite{integrable-scars-1,integrable-scars-2}.

Another class of explicitly known integrable states are the crosscap states. They were classified in
\cite{crosscap-tamas} for a large family of models.

We can use these results to arrive at new examples of tunable crosscap null states. A significant difference from previous examples is that the integrable crosscap states are zero modes of the odd conserved charges of the
models. Therefore, we proceed by regarding the shortest higher charge $Q^{(3)}$ as the ``Hamiltonian''. This way we
can find many new examples of crosscap null states. In a concrete case we find states that 
are closely related to the construction of \cite{crosscap-tunable}.

\subsection{Crosscap states for the $SU(2)$ symmetric XXX model}

In the case of the XXZ Heisenberg spin chains, explicit expressions for higher charges can be found for example in
\cite{GM-higher-conserved-XXZ} or alternatively in the more recent works \cite{xyz-all-charges,nienhuis-xxz}. The
simplest case is the $SU(2)$-symmetric XXX model, whose Hamiltonian is
\begin{equation}
  H=Q^{(2)}=\sum_j  {\bf S}_j\cdot  {\bf S}_{j+1}\,,
\end{equation}
where ${\bf S}$ is a vector of operators that we choose as $(X,Y,Z)$. The first odd charge is given by 
\begin{equation}
  Q^{(3)}=\sum_j  q^{(3)}(j)=\sum_j {\bf S}_j\cdot ( {\bf S}_{j+1}\times   {\bf S}_{j+2})\,.
\end{equation}
Alternatively, the charge can be expressed using SWAP gates as
\begin{equation}
  q^{(3)}(1)=2i(P_{1,2}P_{2,3}-P_{2,3}P_{1,2}),
\end{equation}
where $P_{j,k}$ swaps the local Hilbert spaces at sites $j$ and $k$.

For the model where the Hamiltonian is taken to be $Q^{(3)}$, every crosscap state $\ket{\Psi}$ of the form \eqref{Psi} is a zero energy eigenstate. This can be proven directly using our condition
\eqref{cond}: the local density $h\equiv q^{(3)}$  is anti-symmetric with respect to transposition, and it commutes with
$\AAA^{(3)}=A\otimes A\otimes A$ for any matrix $A$.

This implies that we can construct a three-parameter family of crosscap states. For example we can choose the
parametrization
\begin{equation}
  A=
  \begin{pmatrix}
    1+a & b+ic \\  b-ic & -a
  \end{pmatrix}
\end{equation}
and construct a three parameter family of crosscap zeromodes $\ket{\Psi(a,b,c)}$.  At $a=b=c=0$ the state is the
ferromagnetic reference state $\otimes_{j=1}^{2L}\ket{0}$. Taking derivatives with respect to $a, b, c$ around $a=b=c=0$
we obtain states where various numbers of three different anti-podal Bell pairs are distributed symmetrically along
the chain, added on top of the ferromagnetic reference state. 

Alternatively, we can start from a two-parameter set of states with
\begin{equation}
  A=
  \begin{pmatrix}
    1 & b+ic \\  b-ic & 1
  \end{pmatrix}.
\end{equation}
At $b=c=0$ this gives a homogeneous crosscap state, namely the ``identity crosscap state'' $\ket{\Psi_0}$. Taking
derivatives with respect to $b$ and $c$ we obtain examples of the ``symmetric tensor scars'' introduced in the recent work
\cite{crosscap-tunable}. The entanglement properties of those states were studied in detail in the Appendix of
\cite{crosscap-tunable}. 

\section{Conclusions}

\label{sec:concl}

We have shown that crosscap states with tunable entanglement  appear in various sorts of local spin chain models, if the
condition \eqref{cond} is satisfied. We gave multiple examples of Hamiltonians, both integrable and non-integrable,
which satisfy this constraint.
Our integrable examples include models solvable by free fermions, but also genuinely interacting models such as the
Bariev model and the folded XXZ model. Despite the large body of literature devoted to integrable models and in particular to
free fermionic models, it appears that the presence of these special eigenstates was not noticed previously. Exceptions are those crosscap states that appeared as {\it integrable states} of an integrable model
\cite{crosscap-shota,crosscap-tamas}. However, these kinds of crosscap states are zero modes of a subset of the higher
conserved charges.

Integrable models are often solvable by the Bethe Ansatz. Generally it is believed that if the Bethe Ansatz is
applicable in a certain model, then it reproduces all the eigenstates.
On the other hand, the Bethe states are not a natural basis to describe the unusual entanglement
pattern of the crosscap states, because the Bethe Ansatz wave functions describe scattering processes of \emph{local}
excitations. Therefore, it is 
natural to suspect that whenever there are exact crosscap zeromodes in an integrable model, then the nullspace
should be exponentially degenerate. This would allow the crosscap states to be reconstructed from linear combinations of an
exponential number of Bethe states.  All our examples of integrable models do in fact have this exponential degeneracy.

In contrast, we also found a non-integrable model with 4-site interactions, where the tunable crosscap zero states exist,
but the zero space degeneracy is much smaller and it appears to be compatible with a linear growth. We remind that
a linear growth is always needed to accommodate a one-parameter family of crosscap states. 

The states $\ket{\Psi^{(n)}}$ that we derived in Section \ref{sec:XY} have definite magnetization, and their reduced density
matrices for segments not bigger than half of the chain are proportional to the identity matrix within the subspace with
the given magnetization. This means that the states simulate the infinite temperature ensemble with fixed
magnetization. This way they are natural extensions of the maximally entangled crosscap states, which simulate the
un-constrained infinite temperature ensemble \cite{yoneta-chiba,yoneta-thermal-pure}.

\vspace{1cm}
{\bf Acknowledgments} 

\medskip

We are thankful to Ajit C. Balram, Tam\'as Gombor, Hosho Katsura, Sashikanta Mohapatra, Olexei I. Motrunich,
Sanjay Moudgalya, Bhaskar Mukherjee, Istv\'an Vona and Yasushi
Yoneta for useful discussions. We thank G\'abor Tak\'acs for allowing us to use his high-performance computing cluster.
This research was supported by the
Hungarian National Research, Development and Innovation Office, NKFIH Grant No. K-145904 and the 
NKFIH excellence grant TKP2021-NKTA-64.

\bigskip


\providecommand{\href}[2]{#2}\begingroup\raggedright\endgroup

\end{document}